\DeclareFontFamily{U}{MnSymbolC}{}
\DeclareSymbolFont{MnSyC}{U}{MnSymbolC}{m}{n}
\DeclareFontShape{U}{MnSymbolC}{m}{n}{
    <-6>  MnSymbolC5
   <6-7>  MnSymbolC6
   <7-8>  MnSymbolC7
   <8-9>  MnSymbolC8
   <9-10> MnSymbolC9
  <10-12> MnSymbolC10
  <12->   MnSymbolC12}{}
\DeclareMathSymbol{\intprod}{\mathbin}{MnSyC}{'270}
\numberwithin{equation}{section}
\def\command@factory#1{%
\expandafter\def\csname b#1\endcsname{\mathbf{#1}}
\expandafter\def\csname fk#1\endcsname{\mathfrak{#1}}
\expandafter\def\csname bb#1\endcsname{\mathbb{#1}}
\expandafter\def\csname cl#1\endcsname{\mathcal{#1}}
\expandafter\def\csname bcl#1\endcsname{\mathbfcal{#1}}
}
\newcommand{\ob}[1]{\overline{#1}}
\newcommand{\wt}[1]{\widetilde{#1}}
\newcommand{\mc}[1]{\mathcal{#1}}
\newcommand{\mcal}[1]{\mc{#1}}
\newcommand{\ad}{\operatorname{ad}}
\newcommand{\wick}{\, \widehat{\diamond} \,}
\newcommand{\Ad}{\operatorname{Ad}}
\newcommand{\dd}{\mathrm{d}}
\def\dt{\mathrm{d}t}
\newtheorem{assumption}[theorem]{Assumption}
\def\p{{\partial}}
\newcommand{\rmd}{{\rm d}}
\def\p{\partial}
\pgfplotsset{compat=1.16}
\begin{document}
\title{Lagrangian averaging of singular stochastic actions for fluid dynamics}
%
%\titlerunning{Abbreviated paper title}
% If the paper title is too long for the running head, you can set
% an abbreviated paper title here
%
\author{Theo Diamantakis\inst{1}\orcidID{0009-0000-9338-1620} \and
Ruiao Hu\inst{2}\orcidID{0000-0002-4843-1737}}
%
% \authorrunning{F. Author et al.}
% First names are abbreviated in the running head.
% If there are more than two authors, 'et al.' is used.
%
\institute{Department of mathematics, Imperial College London, UK
\email{t.diamantakis20@imperial.ac.uk} \\ \and
Department of mathematics, Imperial College London, UK
\email{ruiao.hu15@imperial.ac.uk}}
\maketitle              % typeset the header of the contribution
\begin{abstract}
We construct sub-grid scale models of incompressible fluids by considering expectations of semi-martingale Lagrangian particle trajectories. Our construction is based on the Lagrangian decomposition of flow maps into mean and fluctuation parts, and it is separated into the following steps. First, through Magnus expansion, the fluid velocity field is expressed in terms of fluctuation vector fields whose dynamics are assumed to be stochastic. Second, we use Malliavin calculus to give a regularised interpretation of the product of white noise when inserting the stochastic velocity field into the Lagrangian for Euler's fluid. Lastly, we consider closures of the mean velocity by making stochastic analogues of Talyor's frozen-in turbulence hypothesis to derive a version of the anisotropic Lagrangian averaged Euler equation. 

\keywords{Stochastic geometric mechanics \and Generalised Lagrangian mean \and Variational principles}
\end{abstract}
\section{Introduction}
The modelling of multi-scale fluid flows is a well-known problem in fluid dynamics. In previously established work, fluid flows exhibiting multiple spatial scales have typically been modelled by considering averaging techniques and closures to capture the effects of the unresolvable scales on large-scale dynamics. These models are typically referred to as sub-grid scale models. One such model is the Generalised Lagrangian Mean (GLM) model introduced by Andrews \& Mcintyre \cite{andrews_mcintyre_1978} that considered the application of an arbitrary averaging operator to the Lagrangian expansion of ideal incompressible fluid equations and it is later closed in \cite{GJAJA1996343}. To preserve the geometric structure of ideal fluid equations during averaging, one option is to apply averaging to the variational formulation of ideal fluid flows, instead of applying to the equations directly. This approach resulted in several models such as the Lagrangian averaged Navier Stokes model \cite{FHT2001}. More recently, fluid flows that exhibit multiple temporal scales have been modelled by stochastic partial differential equations where the evolution of the fast dynamics is represented by integration against driving stochastic processes such as Brownian motion and geometric rough paths. One class of structure-preserving stochastic perturbation, known as transport noise, is constructed by assuming a semi-martingale decomposition of the Lagrangian fluid flow map $g_t\in \operatorname{Diff}(\mathcal{M})$ (for fluid domain $\mathcal{M}$) in a stochastic variational principle, see, e.g., \cite{Holm2015,DHL2024}. 

Stochastic fluid flows naturally induce a type of ensemble averaging operator. This is the expectation operator $\mathbb{E}(\cdot)$ over the probability space $(\Omega,\mathcal{F},\bbP)$ that supports the driving stochastic processes. $\bbE$ is linear, a projection and commutes with spatial derivative operations. Consequently, the expectation satisfies the required axioms of an average seen in GLM theory \cite{andrews_mcintyre_1978}.
%The expectation operator satisfies the defining properties of the averaging operator. On the fluid domain $\mathcal{M}$, let $f, g : \Omega \times [0,T] \times \mathcal{M} \rightarrow \mathbb{R}$ be smooth $L^2(\mathcal{M})$ functions of $t\in [0,T]$ and $x \in \mathcal{M}$, $a, b\in \mathbb{R}$, we have the following properties.
%\begin{itemize}
    %\item Linearity: $\mathbb{E}[af+bg] = a\mathbb{E}[f] + b\mathbb{E}[g]$.
    %\item Commutativity: 
        %$\mathbb{E}\left[\int_{\mathcal{M}} f dx\right] = \int_\mathcal{M} \mathbb{E}[f]\,dx $ and $\nabla_x \mathbb{E}[f] = \mathbb{E}[\nabla_x f]$.
    %\item Projectivity: $\mathbb{E}[\mathbb{E}[f]] = \mathbb{E}[f]$.
%\end{itemize}
Additionally, one has the analogue of the zero mean fluctuations under expectation operator as martingales. More precisely, for a $\bbR^d$-martingale $Y_t$, the martingale property of It\^o integral implies that $\bbE \left[ \int_0^t \zeta^i_s \dd Y^i_s \right] = 0 \, ,$
% \begin{align*}
%     \bbE \left[ \int_0^t \zeta^i_s \dd Y^i_s \right] = 0 \, , 
% \end{align*}
for any $\clF_t$-adapted vector fields $\zeta_t = (\zeta^i_t)_{i=1}^d \in \mathfrak{X}(\mathcal{M})$.

% In particular, one may consider maps $\Xi^{(\beta)}_t$ satisfying equations $\dd \Xi^{(\beta)}_t = \beta_t (\Xi^{(\beta)}_t) \dt +  \xi_t(\Xi^{(\beta)}_t) \circ \dd W_t$ which modify the drift $u_t \mapsto u_t + \beta$ in \eqref{compmapform}. 

% Our setup shall differ from the assumptions of other GLM type models such as \cite{Holm2002,DarrylDHolm_2002} in that $\bbE \left[ \dd \Xi^{(\beta)}_t \right] = 0$ if and only if $\beta$ is taken as the Stratonovich correction and the map $\Xi_t$ defines a $\clM$-valued martingale. For other choices of $\beta$ this property fails. It is important to note that we only ever consider averages of quantities in either a tangent or linear space, and we do not specify (or define the meaning of) the quantity $\bbE [ \Xi^{(\beta)}_t ]$.

% \todo[inline]{end of GLM section if we do have one.}

% The formal insertion of the semi-martingale $v_t = \dd g_t g_t^{-1}$ satisfying \eqref{vDdefs} into the kinetic energy \eqref{kelagrangian} is ill defined given the nondifferentiability of $g_t$. Indeed, \eqref{vDdefs} implies $v_t$ can only be understood as a function when integrated in time. One may consider a distributional derivative of $\int_0^t v_t \dt$ and observe that $v_t$ contains a $d$-dimensional white noise multiplying the vector fields $\xi = \{\xi^i \}_{i=1}^d$, but this still may not be evaluated into $\ell$ as distributions cannot be multiplied consistently.

In this work, we consider the averaging of stochastic fluid models using the expectation operator to construct probabilistic variants of sub-grid scale models for ideal fluids. Focusing on incompressible flows, the starting assumption is the semi-martingale Lagrangian fluid flow map $g_t \in \operatorname{SDiff}(\mathcal{M})$ similar to the derivation of stochastic transport fluid models in \cite{Holm2015,Memin2014}. 
The main contributions of this work are as follows. In section \ref{sec:expand}, we define the mean velocity by assuming a factorisation of the full flow map into the fluctuation and mean maps. Then, we use Magus expansion of the velocity field of the composite flow such that it is expressed by vector fields and their actions only, in a similar fashion to \cite{MS2003,Shkoller2002}. This way, the expectation operator acts on vector space quantities only and can be easily defined. 
In section \ref{sec:vp}, using tools from Malliavin calculus, we consider a normalised KE Lagrangian functional such that it is well-defined when the input velocity contains white noise terms. To obtain a dynamical equation of the mean flow, we first make dynamical assumptions on the fluctuations vector fields in assumptions \ref{assump:STH} and \ref{assump:orthg}. Then, we explicitly computed the averaged Lagrangian and derived a stochastically averaged version of the anisotropic Lagrangian averaged Euler's equation \cite{MS2003} through an Euler-Poincar\'e variational principle.

% \todo[inline]{Does the expansion work on the equation level? Add link to our previous work of averaging a v.p. and regularising the Lagrangian.\\
% RH: Which equation? \\
% TD: Euler-Poincare or for this case flat Euler eqn\\
% RH: If i average the Euler's equation in Lagrangian form directly, I get
% \begin{align}
%     \frac{d}{dt}g &= u(t, g) + \xi(g) \diamond \dot{W}_t\,,\\
%     \frac{d^2}{dt^2}g &= \p_t u + \frac{d}{dt}g \cdot \diamond \nabla u + \p_t \eta\diamond \dot{W}_t + \frac{d}{dt}g\cdot \diamond \nabla \xi \diamond \dot{W}_t + \xi \diamond \ddot{W}_t
% \end{align}
% which under expectation is just
% \begin{align}
%     \frac{d^2}{dt^2}\bbE[g] = \p_t \bbE[u] + \bbE[u]\cdot\nabla \bbE[u]\,.
% \end{align}
% So not useful unless we choice something more interesting in the noise. TD: Good point. 
% }

\section{Expansion of Kinetic Energy Lagrangian}\label{sec:expand}
Fix a smooth, compact, boundary-less Riemannian manifold $(\clM, \boldsymbol{g})$ of dimension $d$. Throughout this paper we equip $(\clM, \boldsymbol{g})$ with the Levi-Civita connection $\nabla$. Our aim is to derive deterministic dynamical equations for the mean flow velocity of Euler's fluid flow from an averaged variational principle of the form,
\begin{align}
   S :=  \int_0^T \bbE \left[ \ell( v_t) \right] \dt \,, \quad \ell( v_t) := \frac12\int_{\clM} \boldsymbol{g} (v_t, v_t) \dd \mu\,, \quad v_t \in \mathfrak{X}_{\operatorname{div}_\nabla}(\mathcal{M})\,,\label{avglag}
\end{align}
where $\ell$ is the kinetic energy Lagrangian of divergence free vector fields.
% where the kinetic energy Lagrangian is defined by,  
% \begin{align}
%      \quad \ell( v_t) := \frac12\int_{\clM} \boldsymbol{g} (v_t, v_t) \dd \mu \label{kelagrangian}   \, .
% \end{align}
Here, we have denoted $\mu = \operatorname{vol}_{\boldsymbol{g}}$ the Riemannian volume.
The covariant divergence $\operatorname{div}_\nabla$ appearing in \eqref{avglag} and used throughout is defined for any tensor $T \in \mathfrak{T}(\clM)$ by $\operatorname{div}_\nabla (T) := \operatorname{tr}(\nabla T)$ by contracting the covariant $\nabla_i$ index with, by convention, the \emph{first} (possibly after raising) contravariant index of $T$. In the case of vector fields\footnote{One can show that $\operatorname{div}_\nabla$ agrees with the formula $\clL_X \mu = \operatorname{div}_\mu(X) \mu $ when the connection $\nabla$ is Levi-Civita and $\mu$ is the Riemannian volume form. Note that when $\mu = f d^n x$ in local coordinates then $\operatorname{div}_\mu (X) := f^{-1} \partial_{x_i} (f X^i) $.}, $X \in \mathfrak{X}_{\operatorname{div}_\nabla}(\clM)$ implies that
\begin{align*}
    \operatorname{div}_\nabla(X) = \operatorname{tr}\left( d x^i \otimes \frac{\partial}{\partial x^j}\left( \frac{\partial X^j}{\partial x^i} + X^k \Gamma^j_{i k } \right) \right) = \frac{\partial X^i}{\partial x^i} + X^k \Gamma^i_{i k } = 0\,.
\end{align*}
We assume the full Lagrangian flow map $g_t \in \operatorname{SDiff}(\mathcal{M})$ is a semi-martingale satisfying the stochastic differential equation,
\begin{align}
    \dd g_t =  u_t (g_t) \dd t + \sigma_t (g_t) \circ \dd W_t \, , \label{vDdefs}
    % g_t = \int_0^t u_s (g_s) \dd s + \int_0^t \sigma_s (g_s) \circ \dd W_s \, , \label{vDdefs}
\end{align} %\quad \clD_t = \int_0^t \mcal{L}_{u_s} \clD_s \dd s + \int_0^t \mcal{L}_{\xi_s} \clD_s \circ \dd W_s 
such that $v_t$ is formally associated with the vector field $\frac{\dd}{\dt} g_t g^{-1}_t \in \mathfrak{X}_{\operatorname{div}_\nabla}(\mathcal{M})$. Throughout this paper, we denote by $W_t = (W^1_t,\ldots, W^n_t)$ as a $n$-dimensional Brownian motion with identity covariance and abbreviate stochastic differentials such as $\sigma_t(g_t) \circ \dd W_t := \sum_{i=1}^n\sigma^i_t(g_t) \circ \dd W^i_t$. To define the mean flow velocity, we assume the factorisation of the stochastic flow map $g_t \in \operatorname{SDiff}(\clM)$ into 
\begin{align}
    g_t = g^\epsilon_t = \Xi^\epsilon_t \circ \ob{g}_t \,,
\end{align}
where $\ob{g} : [0, T] \rightarrow \operatorname{SDiff}(\mathcal{M})$ is the mean flow map and $\Xi^\epsilon_t$ is a $\epsilon$-parametrised semi-martingale flow map $\Xi^\epsilon_t$, $\Xi:\Omega \times [0,T] \times [0, 1) \rightarrow \operatorname{SDiff}(\clM)$\,,
with following properties
\begin{align*}
    \Xi^0_t = e \quad \forall t \in [0,T]\,, \quad \text{and}\quad  \Xi^\epsilon_0 = e \quad \forall \epsilon \in [0, 1) \,,
\end{align*}
such that $\Xi^\epsilon_t$ is a flow of near identity diffeomorphism. From the mean flow decomposition \eqref{vDdefs}, the canonical definition of the mean flow velocity field is the velocity assumed with the mean flow map $\ob{u}_t := \frac{\dd}{\dt}\ob{g}_t \ob{g}_t^{-1}$. By assumption, $\ob{u}_t$ is a deterministic object and it is natural to seek the dynamics of $\ob{u}_t$ through the variational principle \eqref{avglag} where the stochastic fluctuation induced by the fluctuation map $\Xi^\epsilon_t$ is averaged. Via direct computation, one finds the vector field generated by $g^\epsilon_t$ can be written in terms of the mean velocity $\ob{u}_t$
\begin{align}
    \circ \dd g_t^\epsilon g_t^{\epsilon; -1} = \circ \dd \Xi_t^{\epsilon}\Xi_t^{\epsilon;-1} + \Ad_{\Xi_t^{\epsilon}} \ob{u}_t\,\dt = \sigma_t \circ \dd W_t + u_t \,\dt \,,   \label{eq:stoch composite vf} 
\end{align}
where the notation $\circ \dd g^\epsilon_t$ is there to denote the Stratonovich interpretation of the stochastic differential $\dd g^\epsilon_t$. We observe that $\operatorname{SDiff}(\mc{M})$ does not necessarily possess a vector space structure and the term $\Ad_{\Xi^\epsilon_t}\ob{u}_t$ is non-linear in $\Xi^\epsilon_t$. In order to express terms of the form $\bbE[\Ad_{\Xi^\epsilon_t}\ob{u}_t]$ in the averaged Lagrangian in a workable form, 
we rewrite the fluctuations of $\ob{u}_t$ defined in powers of $\epsilon$ via Taylor expansion as,
\begin{align}
\begin{split}
    \circ\dd g_t^\epsilon g_t^{\epsilon; -1} &= \circ\dd g_t^\epsilon g_t^{\epsilon; -1} \Bigg|_{\epsilon = 0} + \epsilon \frac{\dd}{\dd \epsilon}\Bigg|_{\epsilon = 0} \left( \circ\dd g_t^\epsilon g_t^{\epsilon; -1}  \right) \, + \frac{\epsilon^2}{2}  \frac{\dd^2}{\dd \epsilon^2}\Bigg|_{\epsilon = 0} \left(\circ\dd g_t^\epsilon g_t^{\epsilon; -1} \right)\,  + \clO(\epsilon^3) \\
    & := \circ\dd v^0 + \circ\epsilon \dd v^\prime + \circ\frac{\epsilon^2}{2} \dd v^{\prime \prime} + \clO(\epsilon^3) \,,
\end{split} \label{eq:dgginv expansion}
\end{align}
such that expectation is applied to $\dd g_t^\epsilon g_t^{\epsilon; -1} \in \mathfrak{X}(\mc{M})$. The (formal) Lie algebra of divergence free vector fields is a linear space where these operations can be performed unambiguously. In the expansion \eqref{eq:dgginv expansion}, we will express the first and second order fluctuation fields $\circ \dd v^\prime$ and $\circ \dd v^{\prime\prime}$ from the deformations of $\Xi^\epsilon_t$
\begin{align}
    \frac{\dd}{\dd \epsilon}\Bigg|_{\epsilon = 0} \Xi_t^\epsilon  =: w_t, \quad \frac{\dd^2}{\dd \epsilon^2}\Bigg|_{\epsilon = 0} \Xi_t^\epsilon  =: \chi_t\,, \quad w_t, \chi_t \in \mathfrak{X}_{\operatorname{div}_{\nabla}}(\clM)\,, \quad \forall t \in [0,T]\,.\label{deformations}
\end{align}
Here, the deformation vector fields $w_t$ and $\chi_t$ are assumed to be semi-martingales which possess a unique decomposition into finite variation and (local) martingale parts.
\begin{lemma}
    In terms of the deformation vector fields $w_t$ and $\chi_t$, the fluctuation stochastic vector fields $\dd v^0$, $\dd v^\prime$ and $\dd v^{\prime \prime}$ can be expressed as
    \begin{align}
        \circ \dd v^0 &= \ob{u}_t\,\dt\,, \label{eq:0nd_order_term} \\
        \circ \dd v^\prime &= \circ \dd w_t + \ad_{w_t} \ob{u}_t \,\dt\,, \label{eq:1nd_order_term} \\
        \circ \dd v^{\prime\prime} &= \ad_{w_t} \ad_{w_t} \ob{u}_t\,\dt + \ad_{\chi_t - \nabla_{w_t} {w_t}} \ob{u}_t\,\dt  + \circ \dd \left( \chi_t - \nabla_{w_t} {w_t} \right) + \ad_{w_t} \circ \dd {w_t} \,,\label{eq:2nd order term}
    \end{align}
    where $\nabla$ is a fixed torsion-free connection. For convenience we will use the Levi-Civita connection induced by the metric $\boldsymbol{g}$. 
\end{lemma}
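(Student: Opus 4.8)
The plan is to expand, in powers of $\epsilon$, the two summands of the composite field in \eqref{eq:stoch composite vf},
\[
\circ\dd g_t^\epsilon g_t^{\epsilon;-1} = \xi_t^\epsilon + \Ad_{\Xi_t^\epsilon}\ob u_t\,\dt\,, \qquad \xi_t^\epsilon := \circ\dd\Xi_t^\epsilon\,\Xi_t^{\epsilon;-1}\,,
\]
and to read off $\circ\dd v^0,\ \circ\dd v^\prime,\ \circ\dd v^{\prime\prime}$ as the coefficients of $1,\ \epsilon,\ \tfrac{\epsilon^2}{2}$ in \eqref{eq:dgginv expansion}. The zeroth order is immediate from $\Xi_t^0 = e$: then $\xi_t^0 = 0$ and $\Ad_{\Xi_t^0}\ob u_t = \ob u_t$, giving $\circ\dd v^0 = \ob u_t\,\dt$. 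The whole computation is then organised around the right-trivialised $\epsilon$-velocity $\eta_t^\epsilon := \partial_\epsilon\Xi_t^\epsilon\,\Xi_t^{\epsilon;-1}$, which is a genuine element of $\mathfrak{X}(\clM)$ with $\eta_t^\epsilon|_0 = w_t$ by \eqref{deformations}.

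Next I would establish two structural identities for the $\epsilon$-family. Since $\Ad_{\Xi^\epsilon}\ob u_t = (\Xi^\epsilon)_*\ob u_t$ is a pushforward, differentiating in $\epsilon$ and using the variation-of-inverse rule gives the adjoint-derivative formula $\partial_\epsilon\Ad_{\Xi_t^\epsilon}\ob u_t = \ad_{\eta_t^\epsilon}\Ad_{\Xi_t^\epsilon}\ob u_t$ (equivalently $-[\eta_t^\epsilon,(\Xi_t^\epsilon)_*\ob u_t]$, consistent with \eqref{eq:1nd_order_term} in the convention $\ad_a b = -[a,b]$). Equality of the mixed derivatives $\partial_\epsilon(\circ\dd\Xi^\epsilon) = \circ\dd(\partial_\epsilon\Xi^\epsilon)$, followed by right-translation by $\Xi_t^{\epsilon;-1}$, gives the zero-curvature relation $\partial_\epsilon\xi_t^\epsilon = \circ\dd\eta_t^\epsilon + \ad_{\eta_t^\epsilon}\xi_t^\epsilon$. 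Because the noise is Stratonovich, $\circ\dd$ obeys the usual Leibniz and chain rules, so these manipulations are formally identical to the deterministic ones; the one technical point to justify is the exchange of $\partial_\epsilon$ with $\circ\dd$, which I would ground in the assumed smoothness of $\epsilon\mapsto\Xi_t^\epsilon$ together with the semimartingale property of $w_t,\chi_t$. Evaluating both identities at $\epsilon=0$ (with $\xi^0=0$, $\eta^\epsilon|_0=w_t$) yields $\partial_\epsilon\xi_t^\epsilon|_0 = \circ\dd w_t$ and $\partial_\epsilon\Ad_{\Xi^\epsilon}\ob u_t|_0 = \ad_{w_t}\ob u_t$, hence $\circ\dd v^\prime = \circ\dd w_t + \ad_{w_t}\ob u_t\,\dt$.

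For the second order I would differentiate the two identities once more and evaluate at $\epsilon=0$, obtaining
\[
\tfrac{\dd^2}{\dd\epsilon^2}\Ad_{\Xi^\epsilon}\ob u_t\big|_{0} = \ad_{\partial_\epsilon\eta^\epsilon|_0}\ob u_t + \ad_{w_t}\ad_{w_t}\ob u_t\,, \qquad \partial_\epsilon^2\xi_t^\epsilon\big|_{0} = \circ\dd\big(\partial_\epsilon\eta^\epsilon|_0\big) + \ad_{w_t}\circ\dd w_t\,,
\]
so that, after attaching $\dt$ to the $\Ad$-contribution and summing, the entire statement reduces to identifying $\partial_\epsilon\eta_t^\epsilon|_{\epsilon=0}$.

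This last evaluation is the crux and the main obstacle. Writing $\Xi^\epsilon(x) = x + \epsilon w_t(x) + \tfrac{\epsilon^2}{2}\chi_t(x) + \clO(\epsilon^3)$, inverting the series, and expanding $\eta^\epsilon = \partial_\epsilon\Xi^\epsilon\circ\Xi^{\epsilon;-1}$ produces at first order in $\epsilon$ the vector field $\chi_t - (Dw_t)\,w_t$ in a chart, the would-be second-order differential operator cancelling automatically (as it must, since $\eta^\epsilon\in\mathfrak{X}(\clM)$). The delicate point is covariance: the raw chart-second-derivative $\tfrac{\dd^2}{\dd\epsilon^2}\Xi^\epsilon|_0$ is not tensorial, so $\chi_t$ must be read as the covariant acceleration $\tfrac{\mathrm{D}^2}{\dd\epsilon^2}\Xi^\epsilon|_0$; torsion-freeness of $\nabla$ then replaces $\chi_{\mathrm{chart}} - (Dw_t)\,w_t$ by the intrinsic combination $\chi_t - \nabla_{w_t}w_t$, using $\nabla_{w_t}w_t = (Dw_t)\,w_t + \Gamma(w_t,w_t)$ with symmetric $\Gamma$ and $\chi_{\mathrm{cov}} = \chi_{\mathrm{chart}} + \Gamma(w_t,w_t)$. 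Substituting $\partial_\epsilon\eta^\epsilon|_0 = \chi_t - \nabla_{w_t}w_t$ into the displayed expressions and adding the two contributions assembles \eqref{eq:2nd order term}. I expect the connection bookkeeping — verifying that precisely the torsion-free Christoffel combination survives, that the resulting object is chart-independent, and that the exchange of $\partial_\epsilon$ and $\circ\dd$ is legitimate at second order — to be the part most prone to error.
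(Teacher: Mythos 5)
Your proposal is correct and lands on exactly the paper's formulas, but it gets there by a genuinely different device. The skeleton is shared: you split $\circ\dd g^\epsilon_t g^{\epsilon;-1}_t$ as in \eqref{eq:stoch composite vf} and treat $\Ad_{\Xi^\epsilon_t}\ob{u}_t$ by the same Lie chain rule (your $\partial_\epsilon\Ad_{\Xi^\epsilon}\ob u = \ad_{\eta^\epsilon}\Ad_{\Xi^\epsilon}\ob u$ is the paper's differentiation with $\zeta^\epsilon_t$). The difference is in the group part: the paper writes $\Xi^\epsilon_t = \exp(\Omega^\epsilon_t)$ and differentiates the $\operatorname{dexp}$ series \eqref{magnusvelo} and the Magnus series \eqref{magnusseries} termwise, with only the $k=0,1$ terms surviving at $\epsilon=0$; you instead use the zero-curvature (equality-of-mixed-partials) identity $\partial_\epsilon\xi^\epsilon_t = \circ\dd\eta^\epsilon_t + \ad_{\eta^\epsilon_t}\xi^\epsilon_t$ for the right-trivialised $\epsilon$-velocity $\eta^\epsilon_t = \partial_\epsilon\Xi^\epsilon_t\,\Xi^{\epsilon;-1}_t$ and differentiate it once more. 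The two are equivalent order by order: your $\partial_\epsilon\eta^\epsilon_t\vert_{\epsilon=0}$ plays precisely the role of the paper's $\delta^2\Omega_t$ (compare your second-order identity with \eqref{magnuso2}), and your evaluation $\partial_\epsilon\eta^\epsilon_t\vert_0 = \chi_t - \nabla_{w_t}w_t$ reproduces the paper's identification of $\delta^2\Omega_t$. A concrete merit of your route is at exactly this crux: the paper arrives at $\delta\Xi_t\,\delta\Xi_t^{-1} := -\nabla_{w_t}w_t$ in Remark \ref{rmknabla} partly by analogy (Ado embedding for matrix groups, then the Euclidean chain-rule check), whereas your chart expansion of $\eta^\epsilon = \partial_\epsilon\Xi^\epsilon\circ\Xi^{\epsilon;-1}$, combined with reading $\chi_t$ as the covariant acceleration so that $\chi_{\mathrm{chart}} - (Dw_t)w_t = \chi_t - \nabla_{w_t}w_t$ by symmetry of the torsion-free $\Gamma$, derives that identification directly on $\operatorname{SDiff}(\clM)$ and makes the connection-dependence (and chart-independence) explicit rather than postulated. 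What the Magnus route buys in exchange is a systematic machine valid at all orders, so higher-order expansions in $\epsilon$ need no fresh compatibility identities; your approach would require differentiating the zero-curvature relation repeatedly and re-inverting the flow series at each order. The technical points you flag (interchanging $\partial_\epsilon$ with the Stratonovich differential, smoothness in $\epsilon$) are treated at the same formal level in the paper, so nothing essential is missing from your argument.
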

\begin{proof} The $\mathcal{O}(1)$ term $\rmd v^0$ can be deduced immediately from the the near identity property $\Xi^\epsilon_0 \equiv e$ and $\Ad_e \ob{u}_t = \ob{u}_t$,
\begin{align*}
        \rmd v^0 = \dd g_t^0 g_t^{0; -1} = \Ad_{\Xi_t^0} \ob{u}_t\,\dt + \circ \dd \Xi_t^0 \Xi_t^{0;-1} = \ob{u}_t \,\dt \,.
\end{align*}
To calculate higher order terms, we differentiate the decomposition, where the first term is a direct application of the definition of the adjoint representation (Lie chain rule), %and make use of the fact that $\frac{\partial}{\partial \epsilon} $ and $\circ \rmd $ commute. Introducing the shorthand $(\cdot)^\prime$ for $\frac{\rmd }{\rmd \epsilon} (\cdot)$, we have 
\begin{align*} 
    \circ \rmd v^\prime = \frac{\dd}{\dd \epsilon}\Bigg|_{\epsilon = 0} \circ \rmd v^\epsilon &= \frac{\dd}{\dd \epsilon}\Bigg|_{\epsilon = 0} \left(  \Ad_{\Xi_t^\epsilon} \ob{u}_t\,\dt + \circ \dd \Xi_t^\epsilon \Xi_t^{\epsilon; -1} \right), \quad  \frac{\dd}{\dd \epsilon}\Bigg|_{\epsilon = 0} \Ad_{\Xi_t^\epsilon} \ob{u}_t := \ad_{w_t} \ob{u}_t\, .
\end{align*} 
To take the derivative of $\circ \dd \Xi_t^\epsilon \Xi_t^{\epsilon; -1}$, we express $\Xi_t^\epsilon = \exp(\Omega_t^\epsilon)$ and differentiate $\Omega^\epsilon_t$ through the chain rule on the exponential map. This procedure is known as a Magnus expansion \cite{https://doi.org/10.1002/cpa.3160070404}. Although we work formally, the application of this type of theory to the infinite dimensional group of volume preserving diffeomorphisms can be found in \cite{ACC2014,CCR2023,CC2007}. We state the following formulae that can be obtained through known derivatives of the exponential map at the identity \cite{Hall_2003},
\begin{align}
    \circ \dd \Xi^\epsilon_t \Xi^{\epsilon;-1}_t = \sum_{k=0}^\infty \frac{1}{(k+1)!} \ad^k_{\Omega_t^\epsilon} \circ \dd \Omega_t^\epsilon, \quad \ad^k_{\Omega_t^\epsilon}\left( \cdot \right) := \ad_{\Omega_t^\epsilon} \left( \ad^{k-1}_{\Omega_t^\epsilon} \left( \cdot \right) \right) , \quad \ad^0_{\Omega_t^\epsilon} := I \, . \label{magnusvelo}
\end{align}
The time derivative $\circ \dd \Omega^\epsilon_t$ is expressible as the series \cite{BLANES2009151}
\begin{align}
    \circ \dd \Omega^\epsilon_t = \sum_{k=0}^\infty \frac{B_k}{k!} \ad^k_{\Omega_t^\epsilon} \left( \circ \dd \Xi_t^\epsilon \Xi^{\epsilon;-1}_t \right) \, , \label{magnusseries}
\end{align}
where $B_k$ denotes the Bernoulli numbers with $B_1 = -\frac12$. From the initial assumption $\Xi^0_t \equiv e$ it must be the case that $\Omega_t^0 \equiv \circ \dd \Omega^0_t \equiv \circ \dd \Xi_t^0 \equiv 0$ from differentiating a constant.

Using the Magnus generator $\Omega_t^\epsilon$, we can now compute first $\epsilon$-derivatives of the velocity $\dd \Xi^\epsilon_t \Xi^{\epsilon;-1}_t$ by termwise differentiation of the power series \eqref{magnusvelo},
\begin{align}
    \frac{\partial}{\partial \epsilon} \Bigg\vert_{\epsilon = 0} \circ \dd \Xi^\epsilon_t \Xi^{\epsilon;-1}_t = \sum_{k=0}^\infty \frac{1}{(k+1)!} \frac{\partial}{\partial \epsilon} \Bigg\vert_{\epsilon = 0} \left( \ad^k_{\Omega_t^\epsilon} \circ \dd \Omega_t^\epsilon \right) = \frac{\partial}{\partial \epsilon} \Bigg\vert_{\epsilon = 0} \circ \dd \Omega_t^\epsilon =: \circ \dd \delta \Omega_t  \,. \label{magnuso1}
\end{align}
Note that only the $k=0$ term contributes due to the fact $\Omega^0_t = \circ \dd \Omega_t^0 = 0$. We may relate $\delta \Omega_t$ to the first order fluctuation $w_t$ using the $k=0$ term of the Magnus series \eqref{magnusseries},
\begin{align}
    \circ \dd \delta \Omega_t = \delta \circ\dd \Omega_t = \frac{\partial}{\partial \epsilon} \Bigg\vert_{\epsilon = 0} \left( \circ \dd \Xi^\epsilon_t \Xi^{\epsilon;-1}_t \right) = \circ \dd \left( \delta \Xi_t \Xi^{0;-1}_t \right) + {\circ \dd \Xi^0_t \delta ( \Xi^{-1}_t )} = \circ \dd w_t, \quad \delta \Omega_t = \delta \Xi_t := w_t\, . \label{concreteo1}    
\end{align}
Consequently, we recover $\circ \dd v^0 = \circ \dd w_t + \ad_{w_t} \ob{u}_t \,\dt$.

In order to compute the second order fluctuation of $\rmd v^\epsilon$ we require the connection to take second derivatives. We first differentiate $\Ad_{\Xi^\epsilon} \ob{u}$ through twice application of the Lie chain rule, where the shorthand $(\cdot)^\prime$ denotes $\frac{\partial }{\partial \epsilon} (\cdot)$, 
\begin{align*} 
    \frac{\partial^2 }{\partial \epsilon^2 } \Bigg\vert_{\epsilon = 0} \Ad_{\Xi_t^\epsilon} \ob{u} &= \frac{\partial }{\partial \epsilon } \Bigg\vert_{\epsilon = 0}  \ad_{\Xi_t^{\epsilon; \prime}\Xi_t^{\epsilon; -1} } \Ad_{\Xi_t^\varepsilon} \ob{u}_t\,\dt \\
    &= \left( \ad_{ \zeta^\epsilon_t  } \Ad_{\Xi_t^\varepsilon} \ob{u}_t \,\dt + \ad_{\Xi_t^{\epsilon; \prime}\Xi_t^{\epsilon; -1} }\ad_{\Xi_t^{\epsilon; \prime}\Xi_t^{\epsilon; -1} } \Ad_{\Xi_t^\varepsilon} \ob{u}_t \,\dt \right) \Bigg\vert_{\epsilon = 0} \\
    &= \ad_{\chi_t - \nabla_{w_t} w_t} \ob{u}_t \,\dt + \ad_{w_t} \ad_{w_t} \ob{u}_t \,\dt \, , 
\end{align*} 
where $ \zeta^\epsilon_t := \Xi_t^{\epsilon; \prime\prime}\Xi_t^{\epsilon; -1} -\nabla_{\Xi_t^{\epsilon; \prime}\Xi_t^{\epsilon; -1} } \Xi_t^{\epsilon; \prime}\Xi_t^{\epsilon; -1}$. 
In particular, the term $\nabla_{w_t} w_t$ arises when one attempts to compute a subsequent variation of $\Xi_t^{\epsilon; \prime}\Xi_t^{\epsilon; -1}$. With the Magnus expansion, one can write this term as $\delta^2 \Omega_t$ without use of a connection, however a connection is required to relate this quantity to the fluctuations $\chi_t, w_t$, see the Remark \ref{rmknabla} below.

For the second derivative of $\dd \Xi^\epsilon_t \Xi^{\epsilon;-1}_t$ it suffices to check the $k=0,1$ terms in the series \eqref{magnusseries},
\begin{align}
    \frac{\partial^2}{\partial \epsilon^2} \Bigg\vert_{\epsilon = 0} \circ \dd \Xi^\epsilon_t \Xi^{\epsilon;-1}_t = \sum_{k=0}^\infty \frac{1}{(k+1)!} \frac{\partial^2}{\partial \epsilon^2} \Bigg\vert_{\epsilon = 0} \left( \ad^k_{\Omega_t^\epsilon} \circ \dd \Omega_t^\epsilon \right) = \frac{\partial^2}{\partial^2 \epsilon} \Bigg\vert_{\epsilon = 0} \left( \circ \dd \Omega_t^\epsilon + \frac12 \ad_{\Omega_t^\epsilon} \circ \dd \Omega_t^\epsilon\right) = \circ \dd \delta^2 \Omega_t + \ad_{\delta \Omega_t} \circ \dd \delta \Omega_t \label{magnuso2}  \, . 
\end{align}
Using the relation $\delta \Omega_t = w_t$ and $\delta^2 \Omega_t = \chi_t - \nabla_{w_t} w_t$ (proven below), we thus obtain,
\begin{align*}
    \circ \rmd v^{\prime\prime} = \ad_{w_t} \ad_{w_t} \ob{u}_t\,\dt + \ad_{\chi_t - \nabla_{w_t} w_t} \ob{u}_t\,\dt + \circ \dd \left( \chi_t - \nabla_{w_t} w_t \right) + \ad_{w_t} \circ \dd w_t \, ,    
\end{align*}
as was originally claimed. 
\end{proof} 

\begin{remark}[Connection dependence and matrix Lie groups]\label{rmknabla} 
The expressions \eqref{magnuso1}, \eqref{magnuso2} define equations on any Lie algebra in the absence of any additional structure. As a consequence of $\circ \dd \Xi^0_t = 0$, \eqref{concreteo1} can always be used to relate the variations of $\Omega$ and $\Xi$. The second order term $\delta^2 \Omega_t$ differs, as we must define an action of Lie algebra terms using external structures such as a connection. In order to proceed, one formally writes,
\begin{align} 
\begin{split}
    \circ \dd \delta^2 \Omega_t &= \delta^2 \circ \dd \Omega_t = \frac{\partial^2}{\partial \epsilon^2} \Bigg\vert_{\epsilon = 0} \left( \circ \dd \Xi^\epsilon_t \Xi^{\epsilon;-1}_t -\frac12 \ad_{\Omega_t^\epsilon} \circ \dd \Xi^\epsilon_t \Xi^{\epsilon;-1}_t \right) \\
    &= \frac{\partial}{\partial \epsilon} \Bigg\vert_{\epsilon = 0} \left( \circ \dd \left[ \frac{\partial \Xi^\epsilon_t}{\partial \epsilon} \Xi^{\epsilon;-1}_t \right]  + \ad_{\frac{\partial \Xi^\epsilon_t}{\partial \epsilon} \Xi^{\epsilon;-1}_t} \circ \dd \Xi^\epsilon_t \Xi^{\epsilon;-1}_t \right) - \ad_{\delta \Omega_t} \delta (\circ \dd \Xi_t \Xi^{-1}_t ) \\
    &= \circ \dd \left( \delta^2 \Xi_t\Xi^{0;-1}_t + \delta \Xi_t \delta \Xi^{-1}_t \right) + \ad_{\delta \Xi_t} \delta ( \dd \Xi_t \Xi_t^{-1}) - \ad_{\delta \Omega_t} \delta (\dd \Xi_t \Xi^{-1}_t ) = \circ \dd \left( \chi_t + \delta \Xi_t \delta \Xi^{-1}_t \right)
    \, .
\end{split} \label{concreteo2}
\end{align}
When $\fkg$ is finite dimensional, one may appeal to Ado's Theorem to embed $\fkg \subset \operatorname{GL}(N)$ for some $N \in \bbN$ and give meaning to $\delta \Xi_t \delta \Xi^{-1}_t = -\delta \Xi_t \Xi_t^{0;-1} \delta \Xi_t \Xi_t^{0;-1} = -\delta \Xi_t \delta \Xi_t = -w^2$ as a product of matrices. 
%\begin{align*}
    %\frac{\partial^2}{\partial \epsilon^2} \Bigg\vert_{\epsilon = 0} \circ \dd \Xi^\epsilon_t \Xi^{\epsilon;-1}_t = \circ \dd \delta^2 \Omega_t + \ad_{\delta \Omega_t} \circ \dd \delta \Omega_t = \circ \dd \left( \chi_t - w_t^2 \right)  + \ad_{w_t} \circ \dd w_t \, %.
%\end{align*}
One can then show the second order expansion cancels significantly as,
\begin{align*} 
\begin{split}
\circ \rmd v^{\prime\prime} &= \ad_{w_t} \ad_{w_t} \ob{u}_t\,\dt + \ad_{\chi_t - w_t^2} \ob{u}\,\dt + \circ \dd \left( \chi_t - w^2 \right) + \ad_{w_t} \circ \dd w_t = \circ \dd \chi_t + \ad_{\chi_t} \ob{u}_t \dt - 2 w_t ( \circ \dd v^\prime) \, ,
\end{split}
\end{align*}
under the property that $\ad$ reduces to the commutator of $\operatorname{GL}(N)$ matrices within this embedding. In the case of the diffeomorphism group, a concrete representation of \eqref{magnuso2} in terms of $\delta \Xi_t$ requires identifying $\delta \Xi_t \delta \Xi^{-1}_t := -\nabla_{w_t} w_t$ as a covariant derivative. In Euclidean space, we recover consistency with the ordinary chain rule where the connection is canonical, 
\begin{align*}
    \frac{\partial }{\partial \epsilon} \Bigg\vert_{\epsilon = 0} \delta \Xi_t \circ \Xi^{\epsilon;-1}_t &= \left[ D ( \delta \Xi_t) \circ \Xi^{\epsilon;-1}_t  \right] \cdot \frac{\partial \Xi^{\epsilon;-1}_t }{\partial \epsilon} \Bigg\vert_{\epsilon = 0} = -\left[ D ( \delta \Xi_t) \circ \Xi^{\epsilon;-1}_t  \right] \cdot \left[ D ( \Xi^\epsilon_t) \circ \Xi^{\epsilon;-1}_t  \right]^{-1}  \cdot \left( \frac{\partial \Xi^{\epsilon}_t }{\partial \epsilon}  \circ \Xi_t^{\epsilon;-1} \right) \Bigg\vert_{\epsilon = 0} \\
    &= -D ( \delta \Xi_t)  \cdot \delta \Xi_t = - \nabla^E_{\delta \Xi_t} \delta \Xi_t \, .
\end{align*} 
In the above, we have denoted by $\nabla^E$ the Euclidean connection and $D$ the total (Frech\^et) derivative for maps $\mathbb{R}^n \mapsto \mathbb{R}^n$, with the second equality was derived from the inverse function theorem. In curved geometries, the choice of $\nabla$ is non-canonical, although the Levi-Civita connection is a natural choice we take throughout. %Within this work we shall make the natural choice that $\nabla$ is the Levi-Civita connection induced by the choice of metric $\boldsymbol{g}$.  
\end{remark}

\begin{lemma} \label{2ndorderaltlemma}
The second order fluctuation vector fields $\dd v^{\prime \prime}$ may be written equivalently as
\begin{align} 
    \circ \dd v^{\prime\prime} = \circ \dd \chi_t + \ad_{\chi_t} \ob{u}_t\,\dt - \nabla \nabla \ob{u}_t (w_t,w_t)\,\dt + R(\ob{u}_t, w_t) w_t\,\dt  - 2 \nabla_{w_t} \circ \dd v^\prime\,. \label{eq:2nd order term alt}
\end{align}
where the Riemann curvature tensor $R = R^{\beta}_{i j \alpha } d x^i \wedge d x^j \otimes d x^\alpha \otimes \frac{\partial}{\partial x^\beta}$ is a tensor field of type $(3,1)$ and a section of $\Lambda^2 (T^*M) \otimes \operatorname{End}(TM)$.
\end{lemma}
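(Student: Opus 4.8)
The plan is to start from the first expression for $\circ\dd v^{\prime\prime}$ already established in \eqref{eq:2nd order term} and transform it into \eqref{eq:2nd order term alt} by trading the adjoint actions and the term $\nabla_{w_t}w_t$ for covariant derivatives, a Hessian, and a curvature contribution. Observe first that, using linearity of $\ad$ in its subscript and linearity of $\circ\dd$, the terms $\circ\dd\chi_t$ and $\ad_{\chi_t}\ob u_t\,\dt$ appear verbatim in both forms, so the substantive content is an identity among the remaining terms, all of which are quadratic in the first-order fluctuation $w_t$. Concretely, after cancelling the common $\chi_t$ terms it remains to show
\begin{align*}
\ad_{w_t}\ad_{w_t}\ob u_t\,\dt - \ad_{\nabla_{w_t}w_t}\ob u_t\,\dt - \circ\dd(\nabla_{w_t}w_t) + \ad_{w_t}\circ\dd w_t = -\nabla\nabla\ob u_t(w_t,w_t)\,\dt + R(\ob u_t,w_t)w_t\,\dt - 2\nabla_{w_t}\circ\dd v^\prime\,.
\end{align*}

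The key steps are as follows. First I would expand $\circ\dd(\nabla_{w_t}w_t)$ via the Leibniz rule, valid because Stratonovich calculus obeys the ordinary chain rule and the Christoffel symbols are $t$-independent, giving $\circ\dd(\nabla_{w_t}w_t)=\nabla_{\circ\dd w_t}w_t+\nabla_{w_t}\circ\dd w_t$; one checks this in coordinates from $(\nabla_w w)^k=w^i\partial_i w^k+\Gamma^k_{ij}w^iw^j$. Second, I would eliminate every adjoint action in favour of the connection using torsion-freeness, $\ad_XY=-[X,Y]=\nabla_YX-\nabla_XY$ (the Arnold sign appropriate to the right-trivialised diffeomorphism group), and expand the double bracket $\ad_{w_t}\ad_{w_t}\ob u_t=[w_t,[w_t,\ob u_t]]$ together with $\ad_{\nabla_{w_t}w_t}\ob u_t$ accordingly. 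Third, substituting $\circ\dd v^\prime=\circ\dd w_t+\ad_{w_t}\ob u_t\,\dt$ from \eqref{eq:1nd_order_term} lets me recognise the combination $-2\nabla_{w_t}\circ\dd v^\prime$, and finally I would identify the Hessian via $\nabla\nabla\ob u_t(w_t,w_t)=\nabla_{w_t}\nabla_{w_t}\ob u_t-\nabla_{\nabla_{w_t}w_t}\ob u_t$.

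Separating the resulting identity into its $\circ\dd w_t$-part and its $\dt$-part, the stochastic part reduces cleanly to $-2\nabla_{w_t}\circ\dd w_t$ on both sides and in fact serves as a consistency check that pins down the sign convention $\ad_XY=-[X,Y]$. The main obstacle is the drift part, where the curvature term is generated: after expanding, the surviving discrepancy between the iterated covariant derivatives $\nabla_{\ob u_t}\nabla_{w_t}w_t$, $\nabla_{w_t}\nabla_{\ob u_t}w_t$ and the bracket term $\nabla_{[w_t,\ob u_t]}w_t$ is exactly $R(\ob u_t,w_t)w_t$, so the identity closes only for a curvature convention compatible with the chosen $\ad$ sign, namely $R(X,Y)Z=\nabla_Y\nabla_XZ-\nabla_X\nabla_YZ+\nabla_{[X,Y]}Z$. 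Keeping these two sign conventions mutually consistent is the delicate bookkeeping point; once it is fixed, all non-curvature terms cancel termwise and the claimed form \eqref{eq:2nd order term alt} follows.
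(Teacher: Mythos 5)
Your proposal is correct and follows essentially the same route as the paper's proof: both transform \eqref{eq:2nd order term} into \eqref{eq:2nd order term alt} using the torsion-free identity $\ad_X Y = -\nabla_X Y + \nabla_Y X$, the Hessian identity $\nabla\nabla Z(X,Y) = \nabla_X \nabla_Y Z - \nabla_{\nabla_X Y} Z$, the Leibniz expansion $\circ\dd(\nabla_{w_t}w_t) = \nabla_{\circ\dd w_t}w_t + \nabla_{w_t}\circ\dd w_t$, and the recognition of $\circ\dd v^\prime$ from \eqref{eq:1nd_order_term}, with the curvature term emerging from the antisymmetry of the second covariant derivative under the same convention $R(X,Y)Z = \nabla_Y\nabla_X Z - \nabla_X\nabla_Y Z + \nabla_{[X,Y]}Z$ used in the paper. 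Your splitting into $\circ\dd w_t$- and $\dt$-parts is only a mild reorganisation of the paper's chain of substitutions, and your sign bookkeeping is consistent with \eqref{covidentities}.
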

\begin{proof} 
This follows from repeated use of the following identities,
\begin{align}
   \ad_{X}Y = -[X,Y] = -\nabla_X Y + \nabla_Y X, \quad \nabla \nabla Z (X,Y) = \nabla_X \nabla_Y Z - \nabla_{\nabla_X Y} Z, \quad \forall X,Y,Z \in \fkX(\clD) \,, \label{covidentities}
\end{align}
which follows directly from the definition of being a torsion-free connection and the Leibniz property. Here, we have the covariant Hessian operator $\nabla \nabla $ defined on tensor $T \in \mathfrak{T}(\clM)$ of type $(p,q)$ by $\nabla\nabla T = \nabla_i\nabla_j T dx^i\otimes dx^j$. Inserting \eqref{covidentities} into the equation for $\dd v^{\prime \prime}$ thus gives,
\begin{align*} 
\begin{split}
\circ \rmd v^{\prime\prime} &= \ad_{w_t} \ad_{w_t} \ob{u}_t\,\dt + \ad_{\chi_t - \nabla_{w_t} w_t} \ob{u}_t\,\dt + \circ \dd \left( \chi_t - \nabla_{w_t} w_t \right) + \ad_{w_t} \circ \dd w_t \\
&= \ad_{\chi_t - \nabla_{w_t} w_t} \ob{u}_t\,\dt + \circ \dd \left( \chi_t - \nabla_{w_t} w_t \right) + \ad_{w_t} \circ \rmd v^\prime \\
&= \circ \dd \chi_t + \ad_{\chi_t} \ob{u}_t\,\dt - \left(\nabla_{\circ \dd w_t} w_t + \nabla_{w_t} \circ \dd w_t + \ad_{\nabla_{w_t} w_t} \ob{u}_t\,\dt\right) + \ad_{w_t} \circ \rmd v^\prime \\
&= \circ \dd \chi_t + \ad_{\chi_t} \ob{u}_t\,\dt + \nabla_{\ad_{w_t} \ob{u}_t} w_t\,\dt + \nabla_{w_t} (\ad_{w_t} \ob{u}_t)\,\dt - \ad_{\nabla_{w_t} w_t} \ob{u}_t\,\dt - 2 \nabla_{w_t} \circ \rmd v^\prime \\
&= \circ \dd \chi_t + \ad_{\chi_t} \ob{u}_t\,\dt + \nabla_{-\nabla_{w_t} \ob{u}_t + \nabla_{\ob{u}_t} w_t } w_t\,\dt + \nabla_{w_t} ( -\nabla_{w_t} \ob{u}_t + \nabla_{\ob{u}_t} w_t )\,\dt + \nabla_{\nabla_{w_t} w_t} \ob{u}_t\,\dt \\
& \hspace{20em }- \nabla_{\ob{u}_t} \nabla_{w_t} w_t \,\dt - 2 \nabla_{w_t} \circ \rmd v^\prime \\
&= \circ \dd \chi_t + \ad_{\chi_t} \ob{u}_t\,\dt - \nabla \nabla \ob{u}_t (w_t,w_t)\,\dt - \nabla \nabla w_t (\ob{u}_t, w_t)\,\dt + \nabla \nabla w_t (w_t, \ob{u}_t)\,\dt - 2 \nabla_{w_t} \circ \rmd v^\prime \\
&= \circ \dd \chi_t + \ad_{\chi_t} \ob{u}_t\,\dt - \nabla \nabla \ob{u}_t (w_t,w_t)\,\dt + R(\ob{u}_t, w_t) w_t\,\dt  - 2 \nabla_{w_t} \circ \rmd v^\prime\,.
\end{split}
\end{align*}
\end{proof}
In the last line, we picked up a curvature term as a consequence of the identity,
\begin{align*}
    R(X,Y)Z := -\nabla_X \nabla_Y Z + \nabla_Y \nabla_X Z + \nabla_{[X,Y]}Z = -\nabla \nabla Z (X,Y) + \nabla \nabla Z (Y,X)\,.
\end{align*}
The curvature term $ R(\ob{u},w) w $ vanishes in flat geometries, but otherwise may be expressed as $ R(\ob{u}, w) w  = R(\ob{u}, \cdot) : (w\otimes w)$ as a contraction with a $(2,1)$ tensor field $R(\ob{u}, \cdot) = R^{\beta}_{i j \alpha } \ob{u}^i d x^j \otimes d x^\alpha \otimes \frac{\partial}{\partial x^\beta}$ with a $(2,0)$ tensor field $w \otimes w$. 
With the alternative formulation of Lemma \ref{2ndorderaltlemma}, the full expansion of $\circ \dd g^\epsilon_t g_t^{\epsilon;-1}$ to $\mathcal{O}(\epsilon^2)$ is given by,
\begin{align}
\begin{split}
    \circ \dd g^\epsilon_t g_t^{\epsilon;-1} = \ob{u}_t\,\dt & + \epsilon \left( \circ \dd w_t + \ad_{w_t} \ob{u}_t \, \dt\right) \\
    & + \frac{\epsilon^2}{2} \left( \circ \dd \chi_t + \ad_\chi \ob{u}_t\,\dt - \nabla \nabla \ob{u}_t (w_t,w_t)\,\dt + R(\ob{u}_t, w_t) w_t\,\dt  - 2 \nabla_{w_t} \circ \dd v^\prime \right) + \clO( \epsilon^3) \,, 
\end{split}\label{eq:u expansion}
\end{align}

% As we will be considering expectations of the velocity decomposition, the It\^o representation of the Stratonovich process $\circ \dd v^{\prime\prime}$ can be written as
% \begin{align}
%     \dd v^{\prime\prime} = \dd \chi + \ad_\chi \ob{u}\,\dt - \nabla \nabla \ob{u} (w,w)\,\dt + R(\ob{u}, w) w\,\dt  - 2 \nabla_w \dd v^\prime - \dd \left[ \nabla_w, v^\prime \right]_t\,.
% \end{align}
% where $\left[ \cdot, \cdot \right]_t$ is the quadratic covariation. 
% which can be expressed in terms of Wick product $\wick$ and the distributional white noise $\dot{W}^i$ to be
% \begin{align}
%     \dd g^\epsilon_t g_t^{\epsilon;-1} = \ob{u} + \epsilon \left(\p_t w + \ad_{w} \ob{u}\right) + \frac{\epsilon^2}{2} \left( \p_t\chi + \ad_{\chi} \ob{u} - \nabla \nabla \ob{u} (w,w) + R(w_t, \ob{u}) w - 2 \nabla_w v^\prime \right) + \clO( \epsilon^3) \,, 
% \end{align}
\begin{remark}[Expectation of advected quantities]
Advected quantities $\ob{a}_t \in \mathfrak{T}(\clM)$ of the mean flow $\ob{u}_t$ can also be perturbed in similar fashion to the perturbed stochastic velocity field $\circ \dd g^\epsilon g^{\epsilon;-1}$ and expanded in \eqref{eq:u expansion} by considering the quantity $a^\epsilon_t := \Xi^\epsilon_{t*} \ob{a}_t = g^\epsilon_{t*} a_0$. Applying the standard Lie chain rule in the $\epsilon$-parameter, 
\begin{align*}
    \frac{\dd }{\dd \epsilon}\left(\Xi^\epsilon_{t*} a^\epsilon_t\right) = \Xi^\epsilon_{t*} \frac{\dd }{\dd \epsilon} a^\epsilon_t  - \mcal{L}_{\Xi^{\epsilon;\prime}_t \Xi^{\epsilon;-1}_t} a^\epsilon_t\,,
\end{align*}
where $\mcal{L}: \mathfrak{X}(\clM) \times \mathfrak{T}(\clM) \rightarrow \mathfrak{T}(\clM)$ is the Lie derivative operator, we have the expansion 
\begin{align}
    a^\epsilon_t = \ob{a}_t - \epsilon \mcal{L}_{w_t} \ob{a}_t + \frac{\epsilon^2}{2} \left( -\mcal{L}_{\chi_t - \nabla_{w_t} w_t} \ob{a}_t + \mcal{L}_{w_t} \mcal{L}_{w_t} \ob{a}_t \right) + \clO(\epsilon^3)\,.
\end{align}    
\end{remark}

\section{Closure via variational principle}\label{sec:vp}
The direct substitution of the expansion \eqref{eq:u expansion} into the Lagrangian \eqref{avglag}, dropping all cubic and higher order terms, results in the following formal action,
\begin{align} 
\begin{split}
    S = \bbE \Bigg[ &\int_0^T \int_\clM\frac12 \boldsymbol{g}(\ob{u}_t, \ob{u}_t)\,\dt +  \int_0^T \int_\clM\epsilon \boldsymbol{g}(\ob{u}_t, \circ \dd w_t + \ad_{w_t} \ob{u}_t\,\dt ) \\ 
    & \quad +\int_0^T \int_\clM\frac{\epsilon^2}{2} \boldsymbol{g}\left(\ob{u}_t, \circ \dd \chi_t + \ad_{\chi_t} \ob{u}_t\,\dt - \nabla \nabla \ob{u}_t (w_t,w_t)\,\dt + R(\ob{u}_t, w_t) w_t\,\dt  - 2 \nabla_{w_t} \circ \dd v^\prime \right) \\
    & \quad + \int_0^T \int_\clM \frac{\epsilon^2}{2} \boldsymbol{g}\left( \dot{w}_t + \ad_{w_t} \ob{u}_t, \dot{w}_t + \ad_{w_t} \ob{u}_t \right)\,\dt \Bigg] := \bbE[I_1 + I_2 + I_3 + I_4]\,.
\end{split} \label{biglagrangian}
\end{align} 
Here, $I_1$ is the kinetic energy of the mean velocity $\ob{u}_t$; $I_2$ and $I_3$ are mixed products with stochastic fluctuations $w_t$ and $\chi_t$, of the mean velocity $\ob{u}_t$, understood as stochastic integrals. The kinetic energy of the fluctuations $I_4$ is ill-defined due to terms representing the formal product of white noise and has no classical interpretation. 

In existing literature of stochastic variational principles where the action contains explicit integration against semi-martingales, e.g., \cite{CCR2023,DHP2023,Holm2015,HH2021a}, terms such as $I_4$ are typically dropped. Our approach is to instead retain these terms under an average, which allow for a renormalisation of $I_4$ into a meaningful quantity. The method we employ is to interpret $I_4$ using the Wick product, which defines products of white noise through Malliavin calculus.

We very briefly discuss this construction. Since our usage is only elementary, we do not dwell substantially on definitions. For precise details, we refer the reader to \cite{GjessingHoldenLindstrømØksendalUbøeZhang+1993+29+67} and \cite{di_nunno_malliavin_2009}. One defines a \emph{white noise probability space} $(\Omega, \mathcal{B}, \mu)$ of random distributions inside a sample space of $\mathbb{R}^n$ valued tempered distributions $\Omega := \clS^*(\bbR_+; \bbR^n)$. A probability measure and Borel sigma algebra exists on this vector space via the Bochner-Milnos theorem.

One may then define $L^2(\Omega)$-integrable random variables taking values in the Schwartz class which may be characterised by convergent It\^o-Weiner chaos expansions. The elements of $L^2(\Omega)$ with a rapidly decreasing series expansion are known as the Hida test functions $\mathcal{S}(\Omega)$, and elements its dual space $\mathcal{S}^*(\Omega)$ are known as Hida distributions. Within this space, a distribution valued stochastic process $\dot{W}_t$ may be defined. Crucial for us is that $\mathcal{S}^*(\Omega)$ contains a closed, associative, commutative product that is defined by the formal Cauchy product of the series defined in an $L^2(\Omega)$ orthogonal basis,  
\begin{align*}
   S(\omega, t) = \sum_{\substack{\alpha = (\alpha_1, \ldots, \alpha_l) \\[0.3em] l \in \mathbb{N}}}  c_\alpha(t) H_\alpha(\omega), \quad T(\omega, t) = \sum_{\substack{\beta = (\beta_1, \ldots, \beta_m) \\[0.3em] m \in \mathbb{N}}} \wt{c}_\beta(t)  H_\beta(\omega), \quad S \wick T (\omega,t) := \sum_{\gamma = \alpha + \beta} c_{\alpha}(t) \wt{c}_{\beta}(t) H_{\alpha + \beta}(\omega)\, .   
\end{align*}
The Wick product satisfies several desirable properties, firstly, $\mathbb{E} [ S \wick T] = \mathbb{E}[S] \cdot \mathbb{E}[T]$, where $S, T \in \mathcal{S}^*(\Omega)$ need not be independent random variables. Secondly, if either $S, T$ is deterministic (i.e. $S(t) = \bbE [S](t) := c_0(t)$), then $S \wick T = S \cdot T$ reduces to the ordinary product (well defined as a chaos expansion times a constant). Finally, is the correspondence of an It\^o diffusion with its $S^*(\Omega)$ valued ``Wick calculus" representation
\begin{align*}
    \dd X_t = u(X_t) \,\dt + \sigma(X_t)\dd W_t \iff \dot{X}_t = u(X_t) + \sigma(X_t)\wick \dot{W}_t\,,
\end{align*}
 which can be proven rigorously using the Skorohod integral from Malliavin calculus. Interpreting white noise as a Hida distribution value quantity, $\dot{W} \in \mathcal{S}^*(\Omega)$ with $\bbE [\dot{W}_t] = 0$, we \emph{choose} to interpret the white noise products of $I_4$ in the Wick sense. With this choice the expectation distributes to cancel $I_4$. In the other integrals $I_1, I_2, I_3$ we have at least one deterministic factor, making the distinction between Wick and ordinary products trivial.
\begin{comment}
extend the definition of Lagrangian functional such that it can be distribution valued. For the kinetic energy Lagrangian \eqref{kelagrangian}, we define its stochastic generalisation by the Lagrangian $\ell^{\wick}: \mathcal{S}^*(\Omega, \mathfrak{X}(\clM)) \times \mathcal{S}^*(\Omega, \mathfrak{X}(\clM)) \rightarrow \mcal{S}^*(\Omega)$ to be 
\begin{align}
    \ell^{\wick}(v_t) = \frac{1}{2}\int_{\clM} \sum_{i,j=1}^d \boldsymbol{g}_{ij}v_t^i \wick v_t^j d\mu =: \frac{1}{2}\int_{\clM} \boldsymbol{g}^{\wick}(v_t, v_t) d\mu\,,\label{eq:white noise lag}
\end{align}
such that it is well defined when $v_t$ are distribution valued. We remark that when $v_t$ is deterministic, the Wick product Lagrangian reduces to the deterministic case. The distributive property of $\mathbb{E}(\cdot)$ over $\wick$ implies that 
\begin{align}
\begin{split}
    \mathbb{E}\left[\int_0^T\ell^{\wick}(v_t)\right] &= \int_0^T \frac{1}{2}\sum_{i,j=1}^d \boldsymbol{g}_{ij}\mathbb{E}\left[v^i_t \wick v^j_t\right]\,d\mu \\
    &= \int_0^T \frac{1}{2}\sum_{i,j=1}^d \boldsymbol{g}_{ij}\mathbb{E}\left[v^i_t\right]\mathbb{E}\left[v^j_t\right]\,d \mu =  \int_0^T \ell^{\wick}(\mathbb{E}\left[v_t\right]) =  \int_0^T \ell(\mathbb{E}\left[v_t\right])\,, \label{eq:avg action}
\end{split}
\end{align}
which relates the expectation of the action of the Wick production Lagrangian to the deterministic Lagrangian with the expectation of the arguments. 
\end{comment}
Thus, we will proceed by making assumptions to the dynamics of $w_t$ and $\chi_t$ and take averages of all quantities.

%for an arbitrary stochastic process $X_t$, we can equivalently consider action defined by substituting the It\^o form of the velocity expansion \eqref{eq:u expansion} and its expectation before inserting into the general KE Wick product Lagrangian \eqref{eq:avg action}. For explicit computation of the It\^o form of the velocity decomposition, the dynamical equations of $w_t$ and $\chi_t$ are required. In the current setting, we explicitly prescribe deformation diffeomorphism $\Xi^\epsilon_t$ and treat it as input to the action. 
\begin{assumption}[Additive stochastic Taylor hypothesis]\label{assump:STH}\,\\
    We assume that the dynamics of the first order fluctuation vector field $w_t$ is driven by Brownian motion additively such that it can be written as
    \begin{align}
        \circ \dd w_t + \ad_{w_t} \ob{u}_t\,\dt = \xi \circ \dd W_t\,, \label{STH}
    \end{align}
    for some $\xi \in \mathfrak{X}_{\operatorname{div}_\nabla}^n(\clM)$ that are prescribed, divergence free and temporally constant.
\end{assumption}
In this case, the It\^o and Stratonovich interpretations coincide,
\begin{align*}
    \xi \circ \dd W_t := \sum_{i=1}^n \xi_i \circ \dd W^i_t = \sum_{i=1}^n \xi_i \dd W^i_t = \xi \dd W_t\,.
\end{align*}
In particular, taking the expectation of \eqref{STH} implies that $\bbE \left[ w_t \right] =: \ob{w}_t$ is advected by the mean flow $\ob{u}_t$
\begin{align*}
    \p_t \ob{w}_t + \ad_{\ob{w}_t} \ob{u}_t = 0\,,
\end{align*}
which is equivalent for $w_t$ to satisfy Taylor's frozen turbulence hypothesis in expectation. From the additive stochastic Taylor hypothesis \eqref{STH}, via direct calculation we find the dynamics of the fluctuation correlation tensor $F_t := \mathbb{E}[w_t\otimes w_t] \in \mathfrak{X}(\clM)\otimes \mathfrak{X}(\clM)$ as
\begin{align}
    \p_t F_t + \mcal{L}_{\ob{u}_t} F_t = (\xi\otimes \xi) \,, \label{eq:F eq}
\end{align}
in geometric notation. In terms of index notations, it can be written as
\begin{align*}
    \p_t F^{ij}_t + \ob{u}_t^k\nabla_k \nabla F^{ij}_t - F^{kj}_t \nabla_k \ob{u}_t^i - F^{ik}_t \nabla_k \ob{u}_t^j = \xi^i\xi^j \,.
\end{align*}
Under the additive stochastic Taylor Hypothesis, the velocity decomposition up to $\mathcal{O}(\epsilon^2)$ becomes
\begin{align*}
    \circ \dd g^\epsilon_t g_t^{\epsilon;-1} = \ob{u}_t\,\dt &+ \epsilon \xi \circ dW_t + \frac{\epsilon^2}{2} \left( \circ \dd \chi_t + \ad_{\chi_t} \ob{u}_t\,\dt - \nabla \nabla \ob{u}_t (w_t,w_t)\,\dt + R(\ob{u}_t, w_t) w_t\,\dt  - 2 \nabla_{w_t} \xi \circ dW_t \right) \,,
\end{align*}
which can be expressed in Wick calculus and distributional white noise ${\dot{W}}_t$ as
\begin{align}
\begin{split}
    v^\epsilon_t = \p_t g^\epsilon_t g^{\epsilon;-1}_t = \ob{u}_t + \epsilon \xi \wick \dot{W}_t & + \frac{\epsilon^2}{2}\left(\p_t \chi_t + \ad_{\chi_t} \ob{u}_t - \nabla\nabla\ob{u}_t (w_t, w_t) + R( \ob{u}_t, w_t)w_t - 2\nabla_{w_t} \xi \wick\dot{W}_t \right) \\
    &- \frac{\epsilon^3}{2}\left(\nabla_{\xi_i}\nabla_{w_t}\xi_i + \nabla_{\nabla_{w_t} \xi_i}\xi_i + \nabla_{w_t}\nabla_{\xi_i}\xi_i\right)\,\dt\,. \label{eq:white noise v expansion}
\end{split}    
\end{align}
Here, the Stratonovich to It\^o correction for the additive stochastic Taylor hypothesis results in higher-order terms that are dropped in the expansion by $\epsilon$. Noting that there is no canonical choice when applying the expectation, one may apply it to the velocity decomposition expressed on $\p_t g^{\epsilon}_t \in T_{g^\epsilon_t}\operatorname{SDiff}(\mathcal{M})$ or on $v^\epsilon_t = \p_t g^\epsilon_t g^{\epsilon,-1}_t = T_e\operatorname{SDiff}(\mathcal{M}) \simeq \mathfrak{X}(\mathcal{M})$, which corresponds to the Lagrangian and Eulerian expressions of the fluid velocity respectively. 
In taking the expectation of the Euler-Poincar\'e variational principle, we make the modelling choice to apply the expectation directly to the vector field $v^\epsilon_t$ at fixed Eulerian coordinates.
Thus, we obtain the expected mean flow expansion in terms of $\epsilon$, 
\begin{align}
    \ob{v}^\epsilon_t := \mathbb{E}[v^\epsilon_t] = \ob{u}_t + \frac{\epsilon^2}{2}\left(\p_t \ob{\chi}_t + \ad_{\ob{\chi}_t} \ob{u}_t - \nabla \nabla \ob{u}_t : F_t + R(\ob{u}_t, \cdot) : F_t \right) + \mathcal{O}(\epsilon^3)\,.
\end{align}
\begin{assumption}[Expectation orthogonality hypothesis]\,\\ \label{assump:orthg}
    We assume that the \emph{material} derivative of the expectation of the second order fluctuation vector field $\chi_t$ is orthogonal to the mean flow $\ob{u}_t$ under metric pairing. That is
    \begin{align}
        \int_\clM \boldsymbol{g}(\ob{u}_t, \p_t \ob{\chi}_t + \nabla_{\ob{u}_t} \ob{\chi}_t) = 0\,, \label{eq:orthog cond}
    \end{align}
    where $\ob{\chi}_t := \mathbb{E}[\chi_t]$. Noting that $\ob{u}_t$ is assumed to be deterministic and $\bbE[\cdot]$ commutes with spatial derivatives and integrals, the condition \eqref{eq:orthog cond} can be rewritten as 
    \begin{align*}
        \int_\clM \boldsymbol{g}(\ob{u}_t, \circ \dd \chi_t + \nabla_{\ob{u}_t} \chi_t\,\dt) = \dd M_t\,,
    \end{align*}
    where $M_t \in \mathbb{R}$ is a martingale.
\end{assumption}
Under the assumptions \ref{assump:STH} and \ref{assump:orthg}, substituting the white noise velocity expansion \eqref{eq:white noise v expansion} into the white noise Lagrangian \eqref{biglagrangian} we have
\begin{align*}
    \begin{split}
    S &= \bbE \Bigg[ \int_0^T \int_\clM\frac12 \boldsymbol{g}(\ob{u}_t\,,\, \ob{u}_t)\,\dt +  \int_0^T \int_\clM\epsilon \boldsymbol{g}(\ob{u}_t, \xi )\wick \dot{W} \,\dt + \int_0^T \int_\clM \frac{\epsilon^2}{2} \boldsymbol{g}\left( \xi \,,\,\xi \right)\wick \dot{W}_t \wick \dot{W}_t\,\dt\\ 
    & \qquad \qquad +\int_0^T \int_\clM\frac{\epsilon^2}{2} \boldsymbol{g}\left(\ob{u}_t\,,\, \p_t \chi_t + \ad_{\chi_t} \ob{u}_t - \nabla \nabla \ob{u}_t (w_t,w_t) + R(\ob{u}_t, w_t) w_t  - 2 \nabla_{w_t} \xi \wick \dot{W_t} \right)\,\dt  \Bigg] \\
    &= \int_0^T \int_\clM\frac12 \boldsymbol{g}(\ob{u}_t\,,\, \ob{u}_t) + \frac{\epsilon^2}{2} \boldsymbol{g}\left(\ob{u}_t\,,\, \p_t \ob{\chi}_t + \ad_{\ob{\chi}_t} \ob{u}_t - \nabla \nabla \ob{u}_t : F_t + R(\ob{u}_t, \cdot) : F_t \right)\,\dt\\
    &= \int_0^T \int_\clM \frac12 \boldsymbol{g}(\ob{u}_t, \ob{u}_t) + \frac{\epsilon^2}{2} \boldsymbol{g}\left(\ob{u}_t, -\nabla_{\ob{\chi}_t}\ob{u}_t - \nabla \nabla \ob{u}_t : F_t + R(\ob{u}_t, \cdot) : F_t \right)\,\dt\,.
    \end{split}
\end{align*}
Here, we have used the property that $\bbE[ b(X_t) \wick \dot{W}_t] = 0$ which is analogous to the martingale property to It\^o integrals in the second equality and the orthogonality condition \eqref{eq:orthog cond} in the third equality. 
Noting that $\ob{\chi}_t + \operatorname{div}_\nabla F_t = \mathbb{E}\left[\chi_t + \operatorname{div}_\nabla (w_t \otimes w_t)\right] = \mathbb{E}\left[\chi_t + \nabla_{w_t} w_t\right] \in \mathfrak{X}_{\operatorname{div}_\nabla}(\clM)$, through integration by parts and appropriate boundary conditions we have 
\begin{align}
    \begin{split}
        S:= \int_0^T \int_\clM &\frac{1}{2}\boldsymbol{g}(\ob{u}_t, \ob{u}_t) + \frac{\epsilon^2}{2}\boldsymbol{g}(\nabla \ob{u}_t, \nabla \ob{u}_t \cdot F_t) + \frac{\epsilon^2}{2}\boldsymbol{g}_{ij} R^i_{kln}\ob{u}_t^j \ob{u}_t^l F^{kn}_t \,.
    \end{split}\label{eq:incompressible Lag}
\end{align}
Using the Euler-Poincar\'e constrained variations for $\ob{u}_t$, $\delta \ob{u}_t = \p_t \ob{u}_t - \ad_{\eta_t} \ob{u}_t$ for arbitrary $\eta_t \in \mathfrak{X}_{\operatorname{div}_\nabla}(\clM)$ vanishing at the boundaries, Hamilton's principle $\delta S = 0$ yields the following Euler-Poincar\'e equation
\begin{align}
    \left(\p_t + \mcal{L}_{\ob{u}_t}\right)  \frac{\delta \ell}{\delta \ob{u}_t} = -d p\,, \quad \text{where}\quad \frac{\delta \ell}{\delta \ob{u}_t} &= \ob{u}_t^\flat - \epsilon^2 \operatorname{div}_\nabla (\nabla \ob{u}^\flat_t \cdot F_t) + \epsilon^2 R(\ob{u}_t,\cdot):F_t \,. \label{eq:aniso alpha EP}
\end{align}
Here, $d: \Lambda^k(\clM) \rightarrow \Lambda^{k+1}(\clM)$ is the exterior derivative, $\operatorname{div}_\nabla(\nabla \ob{u}_t^\flat \cdot F_t) := \nabla_j(\nabla_k \ob{u}_t^\flat F^{jk}_t)$, and the pressure contribution arises from the equivalence class of the dual space to the space divergence free vector fields. The Lie derivative $\mathcal{L}_{\ob{u}}$ acting on $m \in \Lambda^1(\clM)$ can be expressed using covariant derivatives as
\begin{align*}
    \mcal{L}_{\ob{u}} m = \nabla_{\ob{u}} m + (\nabla_{\ob{u}})^T m\,.
\end{align*}
Together with the evolution of $F$ derived from the stochastic Taylor hypothesis \ref{assump:STH}, we have the closed system of equations 
\begin{align}
    \begin{split}
        &\p_t \frac{\delta \ell}{\delta \ob{u}_t} + \nabla_{\ob{u}_t} \frac{\delta \ell}{\delta \ob{u}_t} + \left(\nabla_{\ob{u}_t}\right)^T \frac{\delta \ell}{\delta \ob{u}_t} = -d p\,,\\
        &\p_t F_t + \nabla F_t \cdot \ob{u}_t - \nabla \ob{u}_t \cdot F_t - (\nabla \ob{u})^T \cdot F_t = (\xi \otimes \xi) \,,\\
        &\operatorname{div}_{\nabla} (\ob{u}_t) = \operatorname{div}_{\nabla} (\xi_i) = 0\,,
    \end{split}
\end{align}
which are the forced anisotropic Lagrangian averaged Euler equations. Several simplification can be made by specialising to flat geometries such that curvature $R$ vanishes. In this case, we have the following proposition.
\begin{proposition}
    Assuming the correlation tensor $F_t$ has the dynamics
    \begin{align}
        \p_t F_t + \mcal{L}_{\ob{u}_t} F_t = G_t\,,
    \end{align}
    for arbitrary, time dependent $G_t \in \mathfrak{X}(\clM)\otimes\mathfrak{X}(\clM)$ which can have functional dependence on $\ob{u}_t$, $w_t$ and $F_t$. Denoting by $C(q) = \operatorname{div}_{\nabla}(\nabla q \cdot F_t)$ for all arbitrary vector $q_t \in \mathfrak{X}(\clM)$, we have the commutation relation
    \begin{align}
        \left[\p_t + \nabla_{\ob{u}_t}, C \right] q_t = \operatorname{div}_{\nabla}(\nabla_{q_t} G_t)\,.
    \end{align}
\end{proposition}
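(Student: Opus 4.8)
The plan is to prove the identity by a direct component computation that exploits the three structural features of this setting: the metric, and hence the connection, are time-independent, so $\p_t$ commutes with every $\nabla_i$; the geometry is flat, so $\nabla_a\nabla_b=\nabla_b\nabla_a$ on all tensors and in particular $\nabla_a\nabla_b q^l$ is symmetric in $a,b$; and the mean flow is incompressible, $\operatorname{div}_\nabla\ob{u}_t=\nabla_i\ob{u}_t^i=0$. Writing $C$ in components as $(Cq)^l=\nabla_j(F_t^{jk}\nabla_k q^l)$ and $\clD:=\p_t+\nabla_{\ob{u}_t}$, I would first isolate the coefficient dependence of $C$ on $F_t$. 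Because $\p_t$ passes through the connection, the product rule gives the splitting
\[
[\clD,C]\,q = C[\p_t F_t]\,q + \big[\nabla_{\ob{u}_t},C\big]\,q,
\]
where for any $(2,0)$-tensor $S$ I write $C[S]q:=\operatorname{div}_\nabla(\nabla q\cdot S)$ for the same second-order operator with $F_t$ replaced by $S$ (so $C=C[F_t]$). This reduces the whole statement to the purely spatial transport identity $[\nabla_{\ob{u}_t},C]q=C[\mcal{L}_{\ob{u}_t}F_t]q$; granting it, I substitute the evolution law in the form $\p_t F_t=G_t-\mcal{L}_{\ob{u}_t}F_t$ to obtain $[\clD,C]q=C[G_t]q=\operatorname{div}_\nabla(\nabla q\cdot G_t)$, which is the asserted right-hand side (the same contraction that defines $C$, now carried by $G_t$).

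The heart of the matter is the transport identity, and this is where the work lies. I would expand both sides fully, using $\mcal{L}_{\ob{u}_t}F_t^{jk}=\ob{u}_t^m\nabla_m F_t^{jk}-F_t^{mk}\nabla_m\ob{u}_t^j-F_t^{jm}\nabla_m\ob{u}_t^k$, and then match terms graded by the number of derivatives landing on $q$. The two third-order-in-$q$ contributions agree once a triple of covariant derivatives is reordered, i.e.\ they cancel because $R=0$. The second-order-in-$q$ contributions are coefficients contracted against the symmetric tensor $\nabla_a\nabla_b q^l$; after symmetrising the coefficients they match term by term, again using only flatness. The first-order-in-$q$ contributions largely pair off after relabelling dummy indices and reordering one pair of covariant derivatives, leaving a single residual term $-F_t^{mk}(\nabla_m\nabla_j\ob{u}_t^j)\,\nabla_k q^l=-F_t^{mk}\,\nabla_m(\operatorname{div}_\nabla\ob{u}_t)\,\nabla_k q^l$. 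This is precisely the term annihilated by incompressibility, so the transport identity — and with it the proposition — holds because $\ob{u}_t$ is divergence free.

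I expect the main obstacle to be the bookkeeping of the dozen or so terms generated by composing the second-order operator with the first-order transport, and in particular making sure that every reordering of covariant derivatives is legitimate: in a curved background each such step would deposit a curvature term, which is exactly why the statement is confined to flat geometry. A reliable way to organise the computation, and to fix signs, is to verify the source-free case $G_t=0$ first, where the claim is just $[\clD,C]q=0$ and $\p_t F_t=-\mcal{L}_{\ob{u}_t}F_t$; this already forces the cancellation of all $\ob{u}_t$- and $F_t$-derivative terms and exhibits the separate roles of the time-independence of $\nabla$, flatness, and $\operatorname{div}_\nabla\ob{u}_t=0$. The general case then costs nothing extra, since $G_t$ enters only through $C[\p_t F_t]q$ and reassembles, with the coefficient structure of $C$, into $\operatorname{div}_\nabla(\nabla q\cdot G_t)$.
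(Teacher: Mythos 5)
Your proposal is correct and, at its core, uses the same decomposition as the paper's proof: since $\nabla$ is time-independent, $[\p_t, C]\,q = C[\p_t F_t]\,q$ in your notation, so the whole claim reduces to the purely spatial transport identity $[\nabla_{\ob{u}_t}, C]\,q = C[\mcal{L}_{\ob{u}_t}F_t]\,q$, after which the evolution law and linearity of $C$ in its coefficient tensor deliver the source term. The difference is one of self-containedness: the paper disposes of the homogeneous identity by citing \cite{Shkoller2002,MS2003} and remarks only that the required interchanges of covariant derivatives are legitimate because $R \equiv 0$, whereas you verify the identity by direct component expansion, graded by the number of derivatives falling on $q$. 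This buys two things the paper's proof leaves implicit. First, you exhibit precisely where each hypothesis enters --- time-independence of $\nabla$, flatness for the reorderings at each derivative order, and, crucially, incompressibility: the expansion leaves a residual term proportional to $F^{mk}_t\,\nabla_m(\operatorname{div}_\nabla \ob{u}_t)\,\nabla_k q$, which vanishes only because $\operatorname{div}_\nabla \ob{u}_t = 0$; the paper's proof never mentions this hypothesis, though it is in force in the cited references and throughout the paper. Second, your identification of the right-hand side as the same contraction that defines $C$, namely $\nabla_j\left(G^{jk}_t \nabla_k q\right)$, is what the computation actually produces, which usefully clarifies the statement's looser notation $\operatorname{div}_\nabla(\nabla_{q_t} G_t)$ (read literally as the derivative of $G_t$ along $q_t$ it would be a different object; compare the analogous term in the ensuing corollary). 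Your suggestion to verify the source-free case first is exactly the content of the result the paper cites, so no step of your plan is at risk.
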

\begin{proof}
    In the case $\p_t \wt{F}_t + \mcal{L}_{\ob{u}_t} \wt{F}_t = 0$, it was shown in e.g. \cite{Shkoller2002,MS2003}, that $\left[\p_t + \ob{u}_t \cdot \nabla, \wt{C} \right] q_t = 0$ for $\wt{C}u = \operatorname{div}_{\nabla^E}(\nabla u \cdot \wt{F}_t)$ for Euclidean connection $\nabla^E$. Since the contribution from $G_t$ enters only in term $\p_t \left(C (q_t)\right) = \p_t \operatorname{div}_{\nabla}(\nabla q_t \cdot F_t)$, we have the required result through a similar direct computation and switching of covariant derivatives, which always holds under the assumption that $R \equiv 0$.
\end{proof}
\begin{corollary}
    When the curvature $R$ vanishes, the Euler-Poincar\'e equation can \eqref{eq:aniso alpha EP} be expressed equivalently as
    \begin{align}
        (1 - \epsilon^2 C)\left(\p_t + \nabla_{\ob{u}_t}\right)\ob{u}_t - \epsilon^2 C(u)_i\nabla u_i = - \nabla \wt{p} - \operatorname{div}_{\nabla}(\nabla_{\ob{u}_t} (\xi \otimes \xi))\,,
    \end{align}
    where $\wt{p} = p + \frac{1}{2} \nabla(|u_t|^2)$ is the modified pressure.
\end{corollary}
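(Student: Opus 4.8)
The plan is to substitute the flat-geometry momentum into the Euler--Poincar\'e equation \eqref{eq:aniso alpha EP} and then apply the commutation relation of the preceding Proposition to pull the operator $(1-\epsilon^2 C)$ through the advective derivative. When $R\equiv0$ the variational derivative collapses to $\frac{\delta\ell}{\delta\ob{u}_t}=\ob{u}_t^\flat-\epsilon^2\operatorname{div}_\nabla(\nabla\ob{u}_t^\flat\cdot F_t)=(1-\epsilon^2 C)\ob{u}_t^\flat$, using that in flat geometry the metric musical isomorphisms commute with $\nabla$ and $\operatorname{div}_\nabla$. I would therefore first write the equation of motion as $(\p_t+\mcal{L}_{\ob{u}_t})(1-\epsilon^2 C)\ob{u}_t^\flat=-dp$.

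Next I would split the Lie derivative on one-forms with the identity $\mcal{L}_{\ob{u}_t}m=\nabla_{\ob{u}_t}m+(\nabla_{\ob{u}_t})^T m$ recorded just after \eqref{eq:aniso alpha EP}, so that the left-hand side separates into an advective part $(\p_t+\nabla_{\ob{u}_t})(1-\epsilon^2 C)\ob{u}_t^\flat$ and a transpose part $(\nabla_{\ob{u}_t})^T(1-\epsilon^2 C)\ob{u}_t^\flat$. For the advective part I would invoke the Proposition with $q_t=\ob{u}_t$ and $G_t=\xi\otimes\xi$, the forcing in \eqref{eq:F eq}, to commute $C$ past $\p_t+\nabla_{\ob{u}_t}$. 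This returns the desired $(1-\epsilon^2 C)(\p_t+\nabla_{\ob{u}_t})\ob{u}_t$ together with an inhomogeneous remainder proportional to $\operatorname{div}_\nabla(\nabla_{\ob{u}_t}(\xi\otimes\xi))$, which is transferred to the right-hand side to form the forcing term in the statement.

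It remains to treat the transpose part. On the leading term, $(\nabla_{\ob{u}_t})^T\ob{u}_t^\flat$ has components $\ob{u}_{t,j}\nabla_i\ob{u}_t^j=\tfrac12\nabla_i|\ob{u}_t|^2$, an exact one-form $d(\tfrac12|\ob{u}_t|^2)$ that I would absorb into the modified pressure $\wt{p}$. On the correction term, $-\epsilon^2(\nabla_{\ob{u}_t})^T C(\ob{u}_t)^\flat$ has components $-\epsilon^2 C(\ob{u}_t)_j\nabla_i\ob{u}_t^j$, which is the coordinate form of the term written $-\epsilon^2 C(u)_i\nabla u_i$ in the statement. Collecting the advective and transpose contributions and relabelling the pressure then yields the claimed identity.

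The main obstacle is the advective step. Because $C(q)=\operatorname{div}_\nabla(\nabla q\cdot F_t)$ carries its own space and time dependence through $F_t$, the operators $\p_t+\nabla_{\ob{u}_t}$ and $C$ genuinely fail to commute, and the defect is controlled entirely by the transport law $\p_t F_t+\mcal{L}_{\ob{u}_t}F_t=G_t$; supplying this defect is exactly the content of the Proposition, whose proof in turn requires the flatness hypothesis $R\equiv0$ so that the covariant derivatives inside $\operatorname{div}_\nabla$ may be freely interchanged. Beyond this, the work is bookkeeping: tracking the flat/sharp identifications, confirming the curvature contribution $R(\ob{u}_t,\cdot):F_t$ in \eqref{eq:aniso alpha EP} drops once $R$ is set to zero, and checking that the integration-by-parts boundary terms vanish under the assumed boundary conditions.
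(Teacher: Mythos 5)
Your proposal is correct and follows exactly the route the paper intends: the corollary is stated without a separate proof as an immediate consequence of the Proposition, obtained just as you describe by substituting $\frac{\delta \ell}{\delta \ob{u}_t} = (1-\epsilon^2 C)\,\ob{u}_t^\flat$ (valid since $R\equiv 0$ kills the curvature term and the musical isomorphisms commute with $\nabla$), splitting $\mcal{L}_{\ob{u}_t} m = \nabla_{\ob{u}_t} m + (\nabla_{\ob{u}_t})^T m$, applying the commutation relation with $q_t = \ob{u}_t$ and $G_t = \xi\otimes\xi$ from \eqref{eq:F eq}, and absorbing $\tfrac12 \nabla |\ob{u}_t|^2$ into $\wt{p}$. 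The only caveat is bookkeeping you gloss over: tracking the commutator defect carefully, the forcing term arrives on the right-hand side as $+\epsilon^2\operatorname{div}_\nabla(\nabla_{\ob{u}_t}(\xi\otimes\xi))$, so the missing $\epsilon^2$ prefactor (and sign) on that term, and the stray $\nabla$ in $\wt{p} = p + \tfrac12 |\ob{u}_t|^2$, are typos in the printed corollary that your derivation would expose rather than defects in your argument.
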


\section{Conclusion}
In this work, we considered an averaged Euler-Poincar\'e variational principle to construct sub-grid scale ideal fluid models. In this construction, the total fluid flow was assumed stochastic and averaged through the expectation in analogy with the GLM theory. The first step of this construction, made in section \ref{sec:expand}, is the formulation of the mean velocity, defined by the factorisation of the total flow map into the fluctuation and mean maps. Then, the Magnus expansion of the stochastic flow map is considered so that the expectation operator is applied to vector fields and their actions. Our averages then occurred only at the tangent space of the identity, bypassing the difficulty in defining the expectation of diffeomorphism group-valued quantities. 
In section \ref{sec:vp}, we generalised the KE Lagrangian functional such that it is well-defined when the input velocity contains white noise terms using tools from Malliavin calculus. After making dynamical assumptions on the fluctuations vector fields in assumptions \ref{assump:STH} and \ref{assump:orthg}, we explicitly computed the averaged Lagrangian and derived a stochastically averaged version of the anisotropic Lagrangian averaged Euler's equation through an Euler-Poincar\'e variational principle.

Following this work, several open problems present themselves to be addressed in the future. First, the closure dynamics of $w_t$ and $\chi_t$ can be modified to include additional physics. For example, one can assume transport type stochastic perturbation for the dynamics of $w_t$, $\dd w_t + \ad_{w_t} \ob{u}_t + \ad_{w_t} \xi \circ \dd W_t = 0$ as to introduce dissipation into the dynamics of the correlation tensor $F_t$. 
Second, the driving stochastic processes can be generalised from Brownian motions to geometric rough paths. For example, fractional Brownian motion or lifts of Brownian motion that incorporate a noise-induced drift. Such drifts are natural in the context of multiscale limits \cite{CGH2017}. Lastly, the same modelling procedure can be generalised from incompressible Euler's fluid using $\operatorname{SDiff}(\mathcal{M})$ to semi-direct products such that sub-grid scale models for a large class of fluid flows can be derived.

\begin{credits}
\subsubsection{\ackname} We wish to thank D. Holm, O. Street and J. Woodfield for several thoughtful suggestions during the course of this work. TD is supported during the present work by European Research Council (ERC) Synergy grant Stochastic Transport in Upper Ocean Dynamics (STUOD) -- DLV-856408. RH is grateful for the support by Office of Naval Research (ONR) grant award N00014-22-1-2082, Stochastic Parameterization of Ocean Turbulence for Observational Networks. 

\subsubsection{\discintname}
The authors have no competing interests to declare that are
relevant to the content of this article.
\end{credits}
%
% ---- Bibliography ----
%
% BibTeX users should specify bibliography style 'splncs04'.
% References will then be sorted and formatted in the correct style.
%
\bibliographystyle{splncs04}
\bibliography{main}

\end{document}